\def\@tocline#1#2#3#4#5#6#7{\relax
  \ifnum #1>\c@tocdepth 
  \else
    \par \addpenalty\@secpenalty\addvspace{#2}%
    \begingroup \hyphenpenalty\@M
    \@ifempty{#4}{%
      \@tempdima\csname r@tocindent\number#1\endcsname\relax
    }{%
      \@tempdima#4\relax
    }%
    \parindent\z@ \leftskip#3\relax \advance\leftskip\@tempdima\relax
    \rightskip\@pnumwidth plus4em \parfillskip-\@pnumwidth
    #5\leavevmode\hskip-\@tempdima
      \ifcase #1
       \or\or \hskip 1em \or \hskip 2em \else \hskip 3em \fi%
      #6\nobreak\relax
      \dotfill
      \hbox to\@pnumwidth{\@tocpagenum{#7}}
    \par
    \nobreak
    \endgroup
  \fi}
\theoremstyle{plain}
\newtheorem{theorem}{Theorem}[section]
\newtheorem{lemma}[theorem]{Lemma}
\theoremstyle{remark}
\newtheorem{remark}[theorem]{Remark}
\numberwithin{equation}{section}
\newcommand\R{{\ensuremath {\mathbb R} }}
\newcommand\C{{\ensuremath {\mathbb C} }}
\newcommand\N{{\ensuremath {\mathbb N} }}
\renewcommand\phi{\varphi}
\newcommand{\wto}{\rightharpoonup}
\newcommand{\cE}{\mathcal{E}}
\newcommand{\eps}{\epsilon}
\renewcommand{\epsilon}{\varepsilon}
\renewcommand{\geq}{\geqslant}
\renewcommand{\leq}{\leqslant}
\renewcommand{\hat}{\widehat}
\renewcommand{\tilde}{\widetilde}
\newcommand{\Udi}{U_{\mathrm{dip}}}
\title[Stability of 2D dipolar BECs]{On the stability of 2D dipolar Bose-Einstein condensates}
\author[A. Eychenne]{Arnaud EYCHENNE}
\address{Universit\'e Grenoble-Alpes \& CNRS,  LPMMC (UMR 5493), B.P. 166, F-38042 Grenoble, France}
\email{arnaud.eychenne@u-psud.fr}
\author[N. Rougerie]{Nicolas ROUGERIE}
\address{Universit\'e Grenoble-Alpes \& CNRS,  LPMMC (UMR 5493), B.P. 166, F-38042 Grenoble, France}
\email{nicolas.rougerie@grenoble.cnrs.fr}
\begin{document}
\date{September, 2018}

\begin{abstract}
We study the existence of energy minimizers for a Bose-Einstein condensate with dipole-dipole interactions, tightly confined to a plane. The problem is critical in that the kinetic energy and the (partially attractive) interaction energy behave the same under mass-preserving scalings of the wave-function. We obtain a sharp criterion for the existence of ground states, involving the optimal constant of a certain generalized Gagliardo-Nirenberg inequality.     
\end{abstract}

\maketitle

\setcounter{tocdepth}{2}
\tableofcontents

\section{Introduction}

Bose-Einstein condensates (BEC) are macroscopic matter waves consisting of a large number of particles occupying the same quantum state~\cite{CorWie-nobel,Ketterle-nobel}. Since the first realization of BECs with large magnetic dipole moment~\cite{GriPfa-05,StuPfa-05}, dipolar BECs have become a popular subject of investigation in the cold atoms physics community, see~\cite{LahPfa-09} for review.    

In typical experiments, all the dipoles are polarized along a fixed, common, direction set by a strong external field. A 3D condensate with dipole-dipole interactions is then modeled via the following non-local and non-linear Schr\"odinger equation: 
\begin{equation}\label{eq:3D NLS}
i\partial_t \psi = -\frac{1}{2} \Delta \psi + V \psi + \beta |\psi| ^2 \psi + \lambda (\Udi \star |\psi| ^2) \psi 
\end{equation}
where $\psi:\R^3 \mapsto \C$ is the common wave-function of the Bose-condensed particles, $V:\R^3 \mapsto \R^+$ is a confining potential, $\beta,\lambda \in \R$ are the strength of van der Waals (modeled as zero-range, contact) interactions and dipole-dipole interactions, respectively. The dipole-dipole interaction potential is  
\begin{equation}\label{eq:3D dip dip}
\Udi (x) = \frac{3}{4\pi} \frac{1 - 3 (x \cdot n) ^2 / |x| ^2}{|x|^3} = \frac{3}{4\pi} \frac{1 - 3 \cos ^2 (\theta_x) }{|x|^3} 
\end{equation}
where the unit vector $n$ denotes the fixed polarization direction and $\theta_x$ the angle between $n$ and $x$. Note the long-range and anisotropic nature of this potential, which contrasts with the more standard contact interactions. 

The above equation conserves the mass, that we shall henceforth fix to unity:
\begin{equation}\label{eq:3D mass}
\int_{\R^3} |\psi| ^2 = 1,  
\end{equation}
and the energy 
\begin{equation}\label{eq:3D energy}
\cE_{\rm 3D} [\psi] := \int_{\R^3} \frac{1}{2} |\nabla \psi| ^2 + V |\psi| ^2 + \frac{\beta}{2} |\psi| ^4 + \frac{\lambda}{2} \left( \Udi \star |\psi| ^2 \right)|\psi|^2. 
\end{equation}
A central question is to investigate whether or not the energy is bounded below and has minimizers (ground states) under the mass constraint~\eqref{eq:3D mass}. This we refer to as the stability question. 

It has been proved in~\cite{BaoCaiWan-10,CarHaj-15} (see also~\cite{AntSpa-11,CarMarSpa-08,BelJea-16,BelFor-18,LuoSty-18} for other mathematical studies of 3D dipolar BECs, in particular for the existence of non energy minimizing stationary states) that stability holds if and only if 
$$\beta \geq 0 \mbox{ and } -\frac{\beta}{2} \leq \lambda \leq \beta.$$
In particular, a positive contact interaction is always needed to stabilize the gas, and in case of stability, the total interaction energy must always be  non-negative 
$$
\beta \int_{\R^3}  |\psi| ^4 + \lambda \int_{\R^3} \left( \Udi \star |\psi| ^2 \right)|\psi|^2  \geq 0 \mbox{ for all } \psi
$$
so that the total interaction is always globally repulsive. In the stable regime, the 3D dipolar Gross-Pitaevskii energy has recently been rigorously derived from many-body quantum mechanics in~\cite{Triay-17}.  

Quasi-2D dipolar BECs should be more stable than 3D ones, as  put forward first in~\cite{Fisher-06}. Dimensional reduction indeed makes more room for studying attractive interactions, which is particularly interesting in view of the experimental possibilities to tune the trapping potential $V$ for tight confinement along one or two directions. 1D condensates, with or without dipole-dipole interactions, are always stable, because the kinetic energy then prevents the collapse that might be triggered by attractive interactions. The most interesting case is the borderline 2D one, where kinetic and interaction energies have the same order of magnitude in case of collapse. Our aim in this paper is to investigate the stability question for 2D dipolar BECs, with particular emphasis on the attractive regime. We shall extend the results of~\cite{BaoAbdCai-12} by finding a sharp criterion for stability. 

The model for 2D dipolar BECs we use is that formally derived from 3D in~\cite{CaiRosLeiBao-10,Fisher-06,ParOdel-08} (see also~\cite{PedSan-05,NatPedSan-09}, and~\cite{BaoTreMeh-14b} for mathematically rigorous results). We find that the stability criterion  is given in terms of a certain large-frequency limit of the 2D dipolar interaction. In this limit, the  interaction energy reduces to one of the form 
\begin{equation}\label{eq:2D int high}
-a \int_{\R^2} |\psi| ^4 -b \int_{\R^2} (U ^{\rm 2D} \star |\psi| ^2)  |\psi| ^2   
\end{equation}
where $a$ and $b$ are suitable effective parameters and $U ^{\rm 2D}$ is a close analogue of the 3D dipole-dipole interaction~\eqref{eq:3D dip dip}:
\begin{equation}\label{eq:2D dip high}
U ^{\rm 2D} (x) = \frac{1 - 2 \cos ^2 (\theta_x)}{2\pi|x|^2}
\end{equation}
with $\theta_x$ the angle in polar coordinates. Let then $C(a,b)$ be the optimal constant in the following variant of the Gagliardo-Nirenberg inequality\footnote{It might be that $C(a,b)=0$, see below.}:
\begin{equation}\label{eq:intro GN}
a \int_{\R^2} |\psi| ^4 + b \int_{\R^2} (U ^{\rm 2D} \star |\psi| ^2)  |\psi| ^2 \leq C \int_{\R^2} |\nabla \psi| ^2 \int_{\R^2} |\psi| ^2.
\end{equation}
The sharp stability criterion we find is 
$$ C(a,b) < 1, $$
which should be compared with the well-known stability criterion for 2D BECs without dipole-dipole interactions~\cite{GuoSei-13,Maeda-10,Weinstein-83,Zhang-00}, involving the usual ($b = 0$) Gagliardo-Nirenberg inequality and its optimizers~\cite{Frank-13}. The above stability criterion is not very explicit, but in the particular case of dipoles polarized perpendicular to the plane of confinement, it turns out that $b = 0$, so that a simpler expression is obtained. We also investigate the borderline situation $C(a,b) = 1$. It differs markedly from the purely non-dipolar situation thoroughly investigated in~\cite{GuoSei-13} and following papers~\cite{GuoLinWei-17,GuoZenZho-16,LewNamRou-18a} (see also~\cite{AshFroGraSchTro-02,LieYau-87,FroJonLen-07b,Nguyen-18} for related topics), for then the next order of the original, physical, 2D interaction matters in the high-frequency limit. 

\medskip

\subsubsection*{\bf Acknowledgement} 
This project has received funding from the European Research Council (ERC) under the European Union's Horizon 2020 research and innovation programme (grant agreement CORFRONMAT No 758620).

\section{Main Results}

To obtain a sensible 2D model of a dipolar BEC, the procedure followed in~\cite{CaiRosLeiBao-10,ParOdel-08} is to start from the 3D model~\eqref{eq:3D NLS}, \eqref{eq:3D energy} and insert there a potential $V$ strongly confining along the $x_3$ direction. The most physically relevant case is to consider a harmonic perpendicular confinement:
$$ V(x_1,x_2,x_3) = \tilde{V} (x_1,x_2) + \frac{x_3 ^2}{2\eps ^4} $$
with $\eps >0$ a small parameter. It is then natural to use an ansatz of the form 
\begin{equation}\label{eq:2D ansatz}
 \psi (x_1,x_2,x_3) = u (x_1,x_2) \frac{1}{\eps^{1/2} \pi ^{1/4}}e^{-\frac{x_3 ^2}{2\eps ^2}} 
\end{equation}
for the wave-function of the 3D condensate, i.e. to assume that the motion in the $x_3$-direction is confined to the ground state of the harmonic oscillator. The question is then: what is the effective 2D interaction felt by $u$ ? The computation is facilitated by noting that the 3D interaction satisfies
$$ \Udi = - \delta_0 - \frac{3}{4\pi} \partial_{nn} ^2 \left( \frac{1}{|\,.\,|} \right)$$
in the sense of distributions. The conclusion is that inserting the ansatz~\eqref{eq:2D ansatz} in the energy~\eqref{eq:3D energy} leads to the effective functional 
\begin{equation}\label{eq:2D energy I}
\cE_{\rm 2D} [u] = \int_{\R^2} \frac{1}{2} |\nabla u| ^2 + \tilde{V} |u|^2 + \frac{\beta - \lambda + 3 n_3 ^2 \lambda}{2\sqrt{2\pi} \eps} |u| ^4 -  \frac{3\lambda}{4} |u| ^2 \Phi 
\end{equation}
where 
\begin{align}\label{eq:2D int I}  
\Phi &= \left( \partial_{n_\perp} ^2 - n_3 ^2 \Delta \right) \left( U^{\rm 2D}_{\eps} \star |u| ^2 \right)\nonumber \\
U^{\rm 2D}_{\eps} (x_1,x_2) &= \frac{1}{2\sqrt{2} \pi^{3/2}} \int_{\R} \frac{e^{-s^2/2}}{\sqrt{x_1 ^2 + x_2 ^2 + \eps ^2 s ^2}} ds.
\end{align}
In the above we have denoted $n = (n_\perp,n_3)$ with $n_\perp \in \R^2, n_3 \in \R$. Recall that the polarization direction $n$ is a unit vector so that $|n_\perp| ^2 + n_3 ^2 = 1$. Note that the $\eps \to 0$ limit is not taken in the above, it is just assumed that $\eps$ is small enough for the ansatz~\eqref{eq:2D ansatz} to make sense. The $\eps \to 0$ limit actually leads to another model whose stability properties are worked out in~\cite[Section~3]{BaoAbdCai-12}. This case is less interesting than the finite $\eps$ one, for the $\eps \to 0$ limit leads to a stability criterion resembling the 3D one, and in particular demands a globally repulsive interaction.   

From now on we focus on the minimization of the above 2D energy. We clean the notation a little bit by setting $\eps = (2\pi) ^{-1/2}$ and rotating the $(x_1,x_2)$ frame in order to have $n_{\perp} = (1-n_3^2, 0)$. This only changes the trapping potential (unless it is radial) and leads to the general problem of minimizing
\begin{equation}\label{eq:2D energy II}
\cE_{\rm 2D} [u] = \int_{\R^2} \frac{1}{2} |\nabla u| ^2 + V |u|^2 + \frac{\beta - \lambda + 3 n_3 ^2 \lambda}{2} |u| ^4 -  \frac{3\lambda}{4} |u| ^2 \Phi 
\end{equation}
under a unit mass constraint,
with
\begin{align}\label{eq:2D int II}  
\Phi &= \left( (1- 2 n_3) ^3 \partial_{x_1} ^2 - n_3 ^2 \partial_{x_2} ^2 \right) \left( K \star |u| ^2 \right)\nonumber \\
K (x_1,x_2) &= \frac{1}{2\sqrt{2} \pi^{3/2}} \int_{\R} \frac{e^{-s^2/2}}{\sqrt{x_1 ^2 + x_2 ^2 + 2\pi s ^2}} ds
\end{align}
and $V$ a (say smooth, for simplicity) trapping potential
$$ V(x) \underset{|x| \to \infty}{\to} \infty.$$
Concerning the minimization problem 
\begin{equation}\label{eq:2D min ener}
E_{\rm 2D} := \inf \left\{ \cE_{\rm 2D} [u] \: \big| \: \int_{\R^2} |u|^2 = 1 \right\},  
\end{equation}
our main result is as follows:

\begin{theorem}[\textbf{Stability/instability of the 2D dipolar gas}]\label{thm:main}\mbox{}\\
\underline{\emph{(1, generalized Gagliardo-Nirenberg inequality)}} Let $a,b$ be a pair of real numbers such that 
\begin{equation}\label{eq:parameters cond}
a + \frac{b}{2} > 0 \mbox{ or } a - \frac{b}{2} > 0
\end{equation}
and $U^{\rm 2D}$ be defined as in~\eqref{eq:2D dip high}. There exists a constant $C(a,b) >0$ such that 
\begin{equation}\label{eq:GN gen}
a \int_{\R ^2} |u| ^4 + b \int_{\R^2} \left( U ^{\rm 2D} \star |u| ^2 \right) |u| ^2 \leq C(a,b) \int_{\R^2} |\nabla u| ^2 \int_{\R^2} |u| ^2    
\end{equation}
for all $u:\R^2 \mapsto \C$, and a function $u_{a,b}$ such that 
\begin{equation}\label{eq:GN gen opt}
a \int_{\R ^2} |u_{a,b}| ^4 + b \int_{\R^2} \left( U ^{\rm 2D} \star |u_{a,b}| ^2 \right) |u_{a,b}| ^2 = C(a,b) \int_{\R^2} |\nabla u_{a,b}| ^2 \int_{\R^2} |u_{a,b}| ^2.
\end{equation}

\medskip 
 
\noindent  \underline{\emph{(2, stability of 2D dipolar BECs)}} In terms of the parameters of~\eqref{eq:2D energy II}, set 
\begin{equation}\label{eq:parameters}
a = \lambda - \beta + \frac{3\lambda}{2} (n_3^2 - 1), \quad b = 3\lambda (n_3^2 - 1) .
\end{equation}
Then, if
\begin{equation}\label{eq:param triv}
a + \frac{b}{2} \leq 0 \mbox{ and } a - \frac{b}{2} \leq 0 
\end{equation}
or if~\eqref{eq:parameters cond} holds and the above constant $C(a,b) < 1$ we have stability, $E_{\rm 2D} > -\infty$ and there exists a minimizer for Problem~\eqref{eq:2D min ener}.   

\medskip

\noindent  \underline{\emph{(3, instability of 2D dipolar BECs)}} Again with the choice~\eqref{eq:parameters}, if~\eqref{eq:parameters cond} holds and $C(a,b) > 1$ we have instability, $E_{\rm 2D} = -\infty.$
\end{theorem}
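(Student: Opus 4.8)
Write $\rho=|u|^2$ throughout, and pass to Fourier variables whenever convenient.

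\medskip
\noindent\emph{Part (1): the generalized Gagliardo--Nirenberg inequality.} Since $U^{\rm 2D}$ is homogeneous of degree $-2$ with zero angular average (note $1-2\cos^2\theta=-\cos2\theta$), it is, up to normalization, the convolution kernel of $R_1^2-R_2^2$, with $R_1,R_2$ the two-dimensional Riesz transforms; hence $\widehat{U^{\rm 2D}}$ is real, bounded, homogeneous of degree $0$, and sweeps the whole interval $[-\tfrac12,\tfrac12]$. Therefore
\[
a\int_{\R^2}|u|^4+b\int_{\R^2}(U^{\rm 2D}\star\rho)\,\rho
=\frac1{(2\pi)^2}\int_{\R^2}\bigl(a+b\,\widehat{U^{\rm 2D}}(\xi)\bigr)|\widehat\rho(\xi)|^2\,d\xi
\le\Bigl(a+\tfrac{|b|}{2}\Bigr)\int_{\R^2}|u|^4,
\]
and $a+|b|/2>0$ is precisely~\eqref{eq:parameters cond}. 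The ordinary sharp inequality $\int|u|^4\le C_{\rm GN}\int|\nabla u|^2\int|u|^2$ then gives $C(a,b)\le(a+|b|/2)\,C_{\rm GN}<\infty$, while testing with radial profiles (if $a>0$) or with profiles whose Fourier transform concentrates in the cone $\{a+b\,\widehat{U^{\rm 2D}}>0\}$ gives $C(a,b)>0$. To produce an optimizer I would run concentration--compactness on a maximizing sequence for the quotient in~\eqref{eq:GN gen}, normalized by $\int|u_n|^2=\int|\nabla u_n|^2=1$ (the quotient being invariant under $u\mapsto\mu\,u(\lambda\,\cdot)$ and under translations). Vanishing is impossible because $\bigl|\int(U^{\rm 2D}\star\rho_n)\rho_n\bigr|\le\tfrac12\int|u_n|^4$, so the whole left side of~\eqref{eq:GN gen} is $\le C\int|u_n|^4\to0$, contradicting $C(a,b)>0$; dichotomy is excluded by strict superadditivity of $(v,w)\mapsto(\|\nabla v\|_2^2+\|\nabla w\|_2^2)(\|v\|_2^2+\|w\|_2^2)$ together with the $O(d^{-2})$ decay of the interaction between two lumps a distance $d$ apart. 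Compactness up to translations then yields strong convergence in $L^4$ (and in $H^1$ norm) to an optimizer $u_{a,b}$.

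\medskip
\noindent\emph{Key reduction.} Everything in (2) and (3) rests on the Fourier identity
\[
\cE_{\rm 2D}[u]-\frac12\int_{\R^2}|\nabla u|^2-\int_{\R^2}V|u|^2
=-\frac12\Bigl(a\int_{\R^2}|u|^4+b\int_{\R^2}(U^{\rm 2D}\star\rho)\rho\Bigr)
+\frac1{2(2\pi)^2}\int_{\R^2}\varepsilon(\xi)\,|\widehat\rho(\xi)|^2\,d\xi,
\]
with $a,b$ as in~\eqref{eq:parameters} and $\varepsilon\in L^\infty(\R^2)$, $\varepsilon(\xi)\to0$ as $|\xi|\to\infty$. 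Indeed the interaction terms of~\eqref{eq:2D energy II}, given by~\eqref{eq:2D int II}, equal $\frac1{(2\pi)^2}\int m(\xi)|\widehat\rho|^2$ for a bounded continuous multiplier $m$; using $\int_{\R^2}e^{-ix\cdot\xi}(|x|^2+\sigma^2)^{-1/2}\,dx=2\pi|\xi|^{-1}e^{-\sigma|\xi|}$ one gets $\widehat K(\xi)=\pi^{-1}|\xi|^{-2}\bigl(1+O(|\xi|^{-2})\bigr)$ as $|\xi|\to\infty$, so $m(\xi)$ has a direction-dependent limit at infinity which, after the Riesz-transform identification above, is exactly $-\tfrac12\bigl(a+b\,\widehat{U^{\rm 2D}}(\xi/|\xi|)\bigr)$; this fixes $a,b$ and leaves a remainder $\varepsilon$ that is bounded and vanishes at infinity. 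Equivalently, writing $u_L:=L\,u(L\,\cdot)$,
\[
\cE_{\rm 2D}[u_L]=\frac{L^2}{2}\Bigl(\int_{\R^2}|\nabla u|^2-a\int_{\R^2}|u|^4-b\int_{\R^2}(U^{\rm 2D}\star\rho)\rho\Bigr)+o(L^2)\qquad(L\to\infty),
\]
since $\int V|u_L|^2\to V(0)$ and the $\varepsilon$-contribution is $o(L^2)$ after rescaling by dominated convergence.

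\medskip
\noindent\emph{Parts (3) and (2).} For instability, assume~\eqref{eq:parameters cond} and $C(a,b)>1$, and take $u_{a,b}$ from Part (1) with unit mass (smooth and decaying by elliptic regularity for its Euler--Lagrange equation, $U^{\rm 2D}\star$ being Calder\'on--Zygmund; alternatively use a smooth compactly supported near-optimizer). By~\eqref{eq:GN gen opt} the bracket in the last display equals $(1-C(a,b))\int|\nabla u_{a,b}|^2<0$, so $\cE_{\rm 2D}[(u_{a,b})_L]\to-\infty$ along unit-mass functions as $L\to\infty$, giving $E_{\rm 2D}=-\infty$. For stability: when~\eqref{eq:param triv} holds, $a+b\,\widehat{U^{\rm 2D}}\le0$ pointwise and $m$ is bounded below by a constant, an easier sub-case; otherwise $C(a,b)<1$. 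Inserting~\eqref{eq:GN gen} into the key identity and splitting $\int\varepsilon|\widehat\rho|^2$ over $\{|\xi|<R\}$ (bounded below by $-\|\varepsilon\|_\infty\pi R^2$ since $\|\widehat\rho\|_\infty\le\int|u|^2=1$) and $\{|\xi|\ge R\}$ (bounded below by $-\delta(R)(2\pi)^2\int|u|^4$, where $\delta(R):=\sup_{|\xi|\ge R}|\varepsilon|\to0$) yields, for $\int|u|^2=1$,
\[
\cE_{\rm 2D}[u]\ge\frac12\bigl(1-C(a,b)-\delta(R)\,C_{\rm GN}\bigr)\int_{\R^2}|\nabla u|^2+\int_{\R^2}V|u|^2-CR^2,
\]
and choosing $R$ with $\delta(R)C_{\rm GN}<1-C(a,b)$ gives $\cE_{\rm 2D}\ge-CR^2>-\infty$. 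This same bound shows a minimizing sequence has $\int|\nabla u_n|^2$ and $\int V|u_n|^2$ bounded; since $V\to\infty$ this forces tightness of $|u_n|^2$, hence (up to a subsequence) $u_n\to u$ in $L^2$, and in $L^4$ by the critical Gagliardo--Nirenberg inequality and the $H^1$ bound, with $\int|u|^2=1$. Weak lower semicontinuity of the kinetic energy, Fatou for the nonnegative potential term, and continuity of the interaction on $L^2\cap L^4$ then give $\cE_{\rm 2D}[u]\le\liminf\cE_{\rm 2D}[u_n]=E_{\rm 2D}$, so $u$ is a minimizer.

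\medskip
\noindent\emph{Main obstacle.} The delicate part is the key reduction: extracting the precise direction-dependent high-frequency limit of the multiplier built from $K$ and matching it — all constants included, so that the effective parameters come out exactly as in~\eqref{eq:parameters} — with that of $U^{\rm 2D}$, and checking that the leftover symbol $\varepsilon$ truly vanishes at infinity (which is what makes the frequency splitting in (2) close, and what makes the case $C(a,b)=1$ genuinely borderline). Ruling out dichotomy in Part (1) with the long-range interaction present is the other step needing some care, though by now routine.
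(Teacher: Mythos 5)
Your proposal is correct, and Parts (1) and (3) follow essentially the paper's own route: concentration-compactness on a normalized maximizing sequence for the quotient (your ``strict superadditivity of the product'' is exactly the cross-term computation in~\eqref{eq:split 1}), and a rescaled optimizer $u_{a,b}$ as the collapsing trial state for instability. Where you genuinely diverge is the stability half of Part (2). The paper never writes the full physical interaction as a single Fourier multiplier decomposed into leading part plus error; instead it proves a statement valid only along collapsing sequences $\rho_n = L_n^{-2}\nu_n(\cdot/L_n)$ (Lemma~\ref{lem:high lim}), namely $F^{\rm int}[\rho_n] = \frac{2}{L_n^{2}}\int \frac{n_3^2\xi_2^2-(1-2n_3^2)\xi_1^2}{|\xi|^2}|\hat{\nu_n}|^2 + O(L_n^{-1})$, and then argues by contradiction: a minimizing sequence unbounded in $H^1$, once rescaled to unit norms, would have energy $\geq c\,L_n^{-2} + O(L_n^{-1})$ when $C(a,b)<1$, contradicting the upper bound on the energy. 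You instead decompose the symbol once and for all as $-\frac12\bigl(a+b\,\widehat{U^{\rm 2D}}(\xi/|\xi|)\bigr)+\varepsilon(\xi)$ with $\varepsilon$ bounded and vanishing at infinity (indeed $O(|\xi|^{-2})$, which follows from the same elementary estimate $\bigl|(|\xi|^2+s^2)^{-1}-|\xi|^{-2}\bigr|\leq s^2|\xi|^{-4}$ that the paper uses inside its lemma), and split $\int\varepsilon|\hat\rho|^2$ in frequency to obtain a coercivity bound valid for every normalized $u$. Your version buys a direct, quantitative lower bound $E_{\rm 2D}\geq -CR^2$ and dispenses with the contradiction argument; the paper's sequence-based formulation is the one that extends naturally to the borderline case $C(a,b)=1$ of Theorem~\ref{thm:border}, where the next term in the expansion of your $\varepsilon$ is logarithmically divergent and must be tracked explicitly. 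As for the constant-matching you flag as the main obstacle: starting from~\eqref{eq:2D int freq} with $\int_\R e^{-s^2/4\pi}\,ds=2\pi$ and using $\xi_2^2=|\xi|^2-\xi_1^2$, the high-frequency limit of the interaction symbol of~\eqref{eq:2D energy II} does reduce to $-\frac12\bigl(a+b\,\widehat{U^{\rm 2D}}\bigr)$ with exactly the $a,b$ of~\eqref{eq:parameters}, so your reduction closes and no gap remains.
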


A few comments: 

\medskip

\noindent\textbf{1.} The above shows that, unlike in 3D, BECs with genuinely attractive dipolar interactions can be stable in 2D. Note that the parameter $b$ defined in~\eqref{eq:parameters} is always nonpositive, but Inequality~\eqref{eq:GN gen} also makes sense for $b\geq 0$. 

\medskip

\noindent\textbf{2.} Note that the convolution appearing in the left side of~\eqref{eq:GN gen} (just as that appearing in~\eqref{eq:3D energy}) is a singular integral, for $U^{\rm 2D}$ is not locally integrable. However, just as $\Udi$, $U^{\rm 2D}$ averages to $0$ on spheres centered at the origin, so that the convolution is well defined as a principal value~\cite[Chapter~4]{Duoandikoetxea-01}. 

\medskip

\noindent\textbf{3.} The non-local interaction term in~\eqref{eq:2D energy II} can be put in a more familiar form
\begin{equation}\label{eq:F int}
 F^{\rm int} [|u| ^2] := \int_{\R^2} |u| ^2 \Phi = \int_{\R^2} \left(\tilde{K}\star |u|^2\right) |u| ^2 
\end{equation}
with an explicit interaction kernel $\tilde{K}$, but the expression, involving Bessel functions, is not particularly illuminating, see~\cite[Appendix~C]{CaiRosLeiBao-10}. Much more useful to us will be the expression in Fourier variables following from~\cite[Lemma~2.2]{BaoAbdCai-12} and Plancherel's formula:
\begin{equation}\label{eq:2D int freq}
 F^{\rm int} [|u|^2] = - \frac{1}{\pi} \int_{\R^2} \int_{\R} \frac{(1-2n_3 ^2)\xi_1 ^2 - n_3 ^2 \xi_2 ^2}{|\xi| ^2 + s^2} |\hat{\rho} (\xi)| ^2 e^{-\frac{s^2}{4\pi} } ds d\xi 
\end{equation}
with the Fourier transform\footnote{Remark that our convention differs from that of~\cite{BaoAbdCai-12}. We prefer the Fourier transform to be an isometry.} 
$$ \hat{f} (\xi) = \frac{1}{2\pi}\int_{\R^2} f(\eta) e^{-i\xi\cdot\eta} d\eta$$
and $\rho = |u| ^2.$ 
 
\medskip

\noindent\textbf{4.} The main insight in the proof is to realize that, should instability occur, it can only be due to a minimizing sequence concentrating around a point. But then, only the small-length/high-frequency part of~\eqref{eq:2D int freq} contributes to the leading order of the energy, which leads to an effective non-local interaction energy (start from~\eqref{eq:2D int freq}, ignore the $s^2$ term in the denominator of the integrand and do the sum in $s$)
$$ 
 \int_{\R^2} |u| ^2 \Phi = - 2 \int_{\R^2} \frac{(1-2n_3 ^2)\xi_1 ^2 - n_3 ^2 \xi_2 ^2}{|\xi| ^2}|\hat{\rho} (\xi)| ^2 d\xi 
$$
that we relate to~\eqref{eq:GN gen} in Lemma~\ref{lem:int high} below.  

\medskip

\noindent\textbf{5.} The above theorem completes the partial results obtained in~\cite[Section~2]{BaoAbdCai-12} by giving an optimal stability criterion. One might prefer more explicit but less optimal criteria. In this direction, note that for dipoles polarized along the perpendicular axis, i.e. $n_3 ^2 = 1$, the generalized Gagliardo-Nirenberg boils down to the usual one (set $a=1,b=0$ in~\eqref{eq:GN gen}):
\begin{equation}\label{eq:GN}
\int_{\R^2} |u| ^4 \leq C \int_{\R^2} |\nabla u| ^2 \int_{\R^2} |u| ^2. 
\end{equation}
Let $C_{\rm GN}$ be the optimal constant~\cite{Weinstein-83,Frank-13} for the above. The stability criterion is that minimizers exist if $\beta - \lambda > C_{\rm GN} ^{-1}$, and the energy is $-\infty$ if $\beta - \lambda < C_{\rm GN} ^{-1}$.  

Note that since $0 \leq n_3^2 \leq 1$, elementary inequalities already used in~\cite{BaoAbdCai-12} allow to deduce more explicit stability/instability results for the general case from the perpendicular dipoles case. These are not optimal however, and for brevity we leave them to the reader.

\bigskip

Theorem~\ref{thm:main} leaves out the equality case where $C(a,b) = 1$. Refining our methods we can however discuss this as well:

\begin{theorem}[\textbf{Borderline cases}]\label{thm:border}\mbox{}\\
With the notation of Theorem~\ref{thm:main}, assume that~\eqref{eq:parameters cond} holds and $C(a,b) = 1$. We have:  

\noindent (1) If $\lambda (1-3 n_3^2) < 0$, then $E ^{2D} > -\infty$ and minimizers exists. 

\medskip

\noindent (2) If $\lambda (1-3 n_3 ^2) = 0$, then $E^{2D} > -\infty$ but there exists sequences $(u_n)_{n\in\N}$ of finite energy that collapse to a point, $|u_n| ^2 \wto \delta_x$ as measures with $\delta_x$ the Dirac mass at some point $x\in \R^2$.  

\medskip 

\noindent (3) If $\lambda (1-3 n_3 ^2) > 0$, then $E^{2D} = -\infty$.

\end{theorem}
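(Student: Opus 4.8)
The plan is to split $\cE_{\rm 2D}$ into a ``high-frequency'' part controlled by the sharp inequality of Theorem~\ref{thm:main}(1) and a remainder whose behaviour under mass-preserving concentration produces a logarithmic correction of sign $-\lambda(1-3n_3^2)$. Starting from~\eqref{eq:2D int freq} and writing $(|\xi|^2+s^2)^{-1}=|\xi|^{-2}-s^2|\xi|^{-2}(|\xi|^2+s^2)^{-1}$, the Gaussian $s$-integral of the first summand is explicit and gives
$$F^{\rm int}[\rho]=-2\int_{\R^2}\frac{(1-2n_3^2)\xi_1^2-n_3^2\xi_2^2}{|\xi|^2}|\hat\rho|^2+R[\rho],\qquad R[\rho]=\int_{\R^2}\frac{(1-2n_3^2)\xi_1^2-n_3^2\xi_2^2}{|\xi|^2}\,g(\xi)\,|\hat\rho|^2,$$
where $g(\xi)=\pi^{-1}\int_\R s^2e^{-s^2/4\pi}(|\xi|^2+s^2)^{-1}\,ds$ obeys $0\le g\le\min(2,4\pi|\xi|^{-2})$ and $|\xi|^2g(\xi)\to4\pi$ at infinity. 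Feeding the leading term into Lemma~\ref{lem:int high} and using~\eqref{eq:parameters}, one rewrites, for $\int|u|^2=1$,
$$\cE_{\rm 2D}[u]=\cE^{\rm hf}[u]+\int_{\R^2}V|u|^2-\frac{3\lambda}{4}R[|u|^2],\qquad\cE^{\rm hf}[u]:=\frac12\int_{\R^2}|\nabla u|^2-\frac12\Big(a\!\int_{\R^2}|u|^4+b\!\int_{\R^2}(U^{\rm 2D}\star|u|^2)|u|^2\Big).$$
By Theorem~\ref{thm:main}(1) with $C(a,b)=1$ one has $\cE^{\rm hf}\ge0$, vanishing only on dilates, translates and phase shifts of the optimizer $u_{a,b}$; since $V\ge0$, only $-\frac{3\lambda}{4}R$ can be destabilising, and a high/low frequency splitting of the defining integral gives $|R[\rho]|\le2\sqrt2\big(\int|u|^4\big)^{1/2}\le C\big(\int|\nabla u|^2\big)^{1/2}$.

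Next I would establish the concentration asymptotics. For $u_n=\ell_n v_n(\ell_n\,\cdot)$ with $\int|v_n|^2=1$, $\ell_n\to\infty$ and $v_n\to v$ in $H^1$, put $\sigma_n=|v_n|^2$; rescaling and using that $g$ is radial, $R[|u_n|^2]=\int_0^\infty g(t)\Psi_n(t)\,t\,dt$ with $\Psi_n(t)=\int_{S^1}h(\omega)|\hat\sigma_n(t\omega/\ell_n)|^2\,d\omega$ and $h(\omega)=(1-2n_3^2)\omega_1^2-n_3^2\omega_2^2$. Since $\sigma_n\to\sigma:=|v|^2$ in $L^1$ (hence $\hat\sigma_n\to\hat\sigma$ uniformly), $\hat\sigma(0)=(2\pi)^{-1}$, $\int_{S^1}h=\pi(1-3n_3^2)$, and $\int_1^Tg(t)t\,dt=4\pi\log T+O(1)$, a cutoff at $t\simeq\eta\ell_n$ followed by $n\to\infty$ and then $\eta\to0$ yields
$$R[|u_n|^2]=(1-3n_3^2)\log\ell_n+o(\log\ell_n).$$
Applied to $v_n\equiv u_{a,b}/(\int|u_{a,b}|^2)^{1/2}$ (so $\cE^{\rm hf}[v_n]=0$), centred at any $x_0$, this gives $|u_n|^2\wto\delta_{x_0}$, $\int V|u_n|^2\to V(x_0)$, and $\cE_{\rm 2D}[u_n]=-\frac{3\lambda(1-3n_3^2)}{4}\log\ell_n+O(1)$.

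The three cases then follow. If $\lambda(1-3n_3^2)>0$, the above sequence has energy $\to-\infty$, so $E_{\rm 2D}=-\infty$: this is (3). For (1)--(2), take a minimizing sequence $u_n$, $\int|u_n|^2=1$. If $\int|\nabla u_n|^2$ is bounded, then so are $\int|u_n|^4$, $R[|u_n|^2]$ and $\int(U^{\rm 2D}\star|u_n|^2)|u_n|^2$, hence also $\int V|u_n|^2\ (\ge0)$; since $V\to\infty$, $\{u_n\}$ is precompact in $L^2$, thus in $L^4$, giving $E_{\rm 2D}>-\infty$ and a minimizer by lower semicontinuity. If $\int|\nabla u_n|^2\to\infty$, rescale to $v_n(x)=\ell_n^{-1}u_n(x/\ell_n)$ with $\ell_n^2=\int|\nabla u_n|^2$, so that $\cE_{\rm 2D}[u_n]=\ell_n^2\cE^{\rm hf}[v_n]-\frac{3\lambda}{4}R[|u_n|^2]+\int V|u_n|^2\ge\ell_n^2\cE^{\rm hf}[v_n]-C\ell_n$: unless $\cE^{\rm hf}[v_n]\to0$ this tends to $+\infty$, and when it does, the concentration-compactness argument behind Theorem~\ref{thm:main}(1) forces $v_n\to v$ in $H^1$ up to translation with $v$ a mass-one optimizer, and the concentration asymptotics give $\cE_{\rm 2D}[u_n]\ge-\frac{3\lambda(1-3n_3^2)}{4}\log\ell_n+o(\log\ell_n)$. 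For $\lambda(1-3n_3^2)<0$ this again forces the energy $\to+\infty$, so no minimizing sequence blows up: $E_{\rm 2D}>-\infty$ and a minimizer exists, which is (1). For $\lambda(1-3n_3^2)=0$ the logarithm cancels, and refining the expansion to the next order (when $n_3^2=1/3$ the vanishing angular mean of $h$ removes the low-frequency divergence and $R[|u_n|^2]$ converges to a finite limit; when $\lambda=0$ the term is absent) yields $E_{\rm 2D}>-\infty$, while the test sequence above now has bounded energy and collapses, $|u_n|^2\wto\delta_{x_0}$: this is (2).

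I expect the main obstacle to be the concentration asymptotics, and especially its use in the lower bounds, where the expansion of $R[|u_n|^2]$ must be made uniform over profiles $v_n$ merely converging in $H^1$ to an optimizer. In case (2) this seems to require a quantitative, Hessian-coercivity form of the generalized Gagliardo--Nirenberg inequality near $u_{a,b}$ — hence a non-degeneracy property of $u_{a,b}$ — together with the second-order high-frequency expansion of the physical 2D dipolar kernel mentioned above.
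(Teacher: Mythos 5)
Your proposal is correct and follows essentially the same route as the paper: an exact second-order high-frequency decomposition of the physical interaction, the sharp inequality with $C(a,b)=1$ to reduce any blow-up of a minimizing sequence to a GN-optimizing profile (strongly convergent in $H^1$ as in Remark~\ref{rem:strong CV}), and the angular average $\int_{S^1}h=\pi(1-3n_3^2)$ producing the decisive logarithmic term. The non-degeneracy/Hessian-coercivity of $u_{a,b}$ you flag at the end is not actually needed: in case (2) one only requires uniform boundedness of $R[|u_n|^2]$ along such profiles, which follows from the Calder\'on--Zygmund bound for the zero-angular-average kernel, exactly as in the paper's treatment.
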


Some comments:

\medskip 

\noindent\textbf{1.} This is based on analyzing the next order of the interaction (beyond~\eqref{eq:GN gen}) in the high frequency limit. This leads to a singular term that can either prevent collapse if it is overall repulsive (case (1)) or on the contrary enhance it (case (3)). For the special angle $n_3 ^2 = 1/3$ (case (2)) the second order interaction turns out not to diverge in the high-frequency limit because of an averaging over angles that kills the singularity in Fourier space. We have less information in this case. Note that some sequences having finite energy but badly failing to be compact does not rule out the existence of minimizers by itself. It does rule out standard approaches to the problem however.

\medskip 

\noindent\textbf{2.} In~\cite{GuoSei-13} (case $\lambda = 0$), the behavior of the minimizers is studied in details when one approaches the borderline case from the stability side. It is not obvious how to do the same here because of the diverging next order term in the high frequency limit. 

\bigskip

The rest of the paper is devoted to the proofs of our main theorems. We first discuss the generalized Gagliardo-Nirenberg inequality in Section~\ref{sec:high frequency}, proving Item 1 of Theorem~\ref{thm:main}. Then we deduce Items 2 and 3 in Section~\ref{sec:proof main}. A refinement of the method, discussed in Section~\ref{sec:borderline}, leads to the proof of Theorem~\ref{thm:border}.

\section{The high frequency model}\label{sec:high frequency}

We start by properly defining the high-frequency interaction entering the left-hand side of~\eqref{eq:GN gen}:

\begin{lemma}[\textbf{Interaction energy of collapsing densities}]\label{lem:int high}\mbox{}\\
Recall that 
\begin{equation}\label{eq:2D dip lem}
 U ^{\rm 2D} (x) = \frac{1 - 2 \cos ^2 (\theta_x)}{2\pi|x|^2}. 
\end{equation}
The linear map 
$$ \rho \mapsto U^{\rm 2D} \star \rho$$
is bounded from $L^p (\R^2)$ to $L^p(\R^2)$ for any $1<p<\infty$. Moreover we have the distributional identity 
\begin{equation}\label{eq:distri high}
 U ^{\rm 2D} = \partial_{x_1} ^2 \left( \log |\, . \,| \right) - \pi \delta_0
\end{equation}
and the Fourier transform of~\eqref{eq:2D dip lem} is given by 
\begin{equation}\label{eq:Fourier high}
 \widehat{U ^{\rm 2D}} (\xi) = \frac{\xi_1 ^2}{|\xi| ^2} - \frac{1}{2}. 
\end{equation}
\end{lemma}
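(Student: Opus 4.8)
The plan is to handle the three assertions in turn. First I would put the kernel in Cartesian form: since $\cos^2(\theta_x) = x_1^2/|x|^2$, formula~\eqref{eq:2D dip lem} becomes
\[
U^{\rm 2D}(x) = \frac{1}{2\pi}\,\frac{x_2^2 - x_1^2}{|x|^4},
\]
which is a Calder\'on--Zygmund convolution kernel on $\R^2$: smooth on $\R^2\setminus\{0\}$, homogeneous of the critical degree $-2$, and with angular profile $\theta\mapsto 1 - 2\cos^2\theta = -\cos(2\theta)$ of vanishing mean over the unit circle (which is also what makes the principal value well defined). Hence $\rho\mapsto U^{\rm 2D}\star\rho$ is a classical singular integral operator, and the Calder\'on--Zygmund theorem (e.g.~\cite{Duoandikoetxea-01}) gives boundedness on $L^p(\R^2)$ for all $1<p<\infty$, which is the first claim.

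For the distributional identity~\eqref{eq:distri high} I would compute directly. Since $\partial_{x_1}\log|x| = x_1/|x|^2 \in L^1_{\mathrm{loc}}(\R^2)$, this holds in $\mathcal{D}'(\R^2)$ with no contribution at the origin. Differentiating once more against $\phi\in C^\infty_c(\R^2)$, one writes $\langle\partial_{x_1}^2\log|\cdot|,\phi\rangle = -\int_{\R^2}(x_1/|x|^2)\,\partial_{x_1}\phi = -\lim_{\eps\to0}\int_{\{|x|\ge\eps\}}(x_1/|x|^2)\,\partial_{x_1}\phi$ and integrates by parts on $\{|x|\ge\eps\}$, where the integrand is smooth; in the limit this yields $\langle\partial_{x_1}^2\log|\cdot|,\phi\rangle = \mathrm{p.v.}\int_{\R^2}\frac{x_2^2-x_1^2}{|x|^4}\phi + \pi\phi(0)$, the interior term being the principal value and the Dirac term arising from the surface integral over $\{|x|=\eps\}$, with the value $\pi$ produced by $\int_{S^1}\cos^2\theta\,d\theta = \pi$. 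Recalling that $U^{\rm 2D}$ is defined as a principal value then gives~\eqref{eq:distri high}. (Alternatively, with $z = x_1+ix_2$ one has $\log|x| = \mathrm{Re}\log z$, so $\partial_z^2\log|x| = -\tfrac12 z^{-2}$ in $\mathcal{D}'$ with no Dirac mass since $z^{-1}\in L^1_{\mathrm{loc}}$ and $\partial_z(z^{-1})$ coincides with $-\,\mathrm{p.v.}(z^{-2})$; combined with $2\partial_{x_1}^2 = \Delta + (\partial_{x_1}^2-\partial_{x_2}^2)$ and $\Delta\log|x| = 2\pi\delta_0$ this recovers the same formula.)

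For~\eqref{eq:Fourier high} I would then Fourier transform~\eqref{eq:distri high}. From $\Delta\log|\cdot| = 2\pi\delta_0$ one gets that $|\xi|^2\widehat{\log|\cdot|}$ is constant, hence $\widehat{\log|\cdot|} = -c|\xi|^{-2}$ modulo a distribution supported at $\{0\}$; therefore $\widehat{\partial_{x_1}^2\log|\cdot|} = -\xi_1^2\widehat{\log|\cdot|} = c\,\xi_1^2/|\xi|^2$ away from the origin, and subtracting $\pi\widehat{\delta_0}$ gives $\widehat{U^{\rm 2D}}(\xi) = \xi_1^2/|\xi|^2 - \tfrac12$ for $\xi\neq 0$. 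The first part guarantees that $\widehat{U^{\rm 2D}}$ is the bounded symbol of a bounded operator, so no term carried by $\{0\}$ has been dropped and the formula holds on $\R^2\setminus\{0\}$ (the value at the origin being irrelevant for a distribution); alternatively one may simply quote the Fourier computation of~\cite[Lemma~2.2]{BaoAbdCai-12}.

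The underlying ideas are all classical; the real work is bookkeeping. The delicate point is pinning down the coefficient of $\delta_0$ in the distributional second derivative of $\log|\cdot|$ --- a sign slip, or a miscounted angular average $\int_{S^1}\cos^2\theta\,d\theta = \pi$, would corrupt both~\eqref{eq:distri high} and~\eqref{eq:Fourier high} --- together with keeping the factors of $2\pi$ consistent with the paper's (isometric) Fourier and convolution normalizations, and checking that Fourier transforming the distributional identity does not silently discard a term supported at the origin. That last check is exactly where the Calder\'on--Zygmund structure from the first part is used; in fact $L^p$ boundedness can also be read off directly from~\eqref{eq:Fourier high} via the Mikhlin--H\"ormander multiplier theorem, since $\xi_1^2/|\xi|^2 - \tfrac12$ is a smooth multiplier homogeneous of degree zero.
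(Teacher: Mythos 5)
Your strategy coincides with the paper's, which disposes of the three claims in a few lines of citations: the $L^p$ bound is the Calder\'on--Zygmund theorem of~\cite{Duoandikoetxea-01} applied to the smooth, degree $-2$ homogeneous, mean-zero kernel; the distributional identity is the same integration by parts that produces the Green function of the 2D Laplacian; and the Fourier transform is read off from $-\Delta\left(-\tfrac{1}{2\pi}\log|\cdot|\right)=\delta_0$. Your writeup is essentially a fleshed-out version of that proof, and the individual computations are correct.

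There is, however, a concrete mismatch at the step where you assert that your integration by parts ``gives~\eqref{eq:distri high}''. What you correctly establish is
\[
\partial_{x_1}^2\left(\log|\cdot|\right)\;=\;\mathrm{p.v.}\,\frac{x_2^2-x_1^2}{|x|^4}\;+\;\pi\,\delta_0,
\]
whereas $U^{\rm 2D}=\tfrac{1}{2\pi}\,\mathrm{p.v.}\,\tfrac{x_2^2-x_1^2}{|x|^4}$ carries the prefactor $\tfrac{1}{2\pi}$. Your computation therefore yields
\[
U^{\rm 2D}\;=\;\frac{1}{2\pi}\,\partial_{x_1}^2\left(\log|\cdot|\right)\;-\;\frac{1}{2}\,\delta_0,
\]
which is $\tfrac{1}{2\pi}$ times the right-hand side of~\eqref{eq:distri high} as printed; a pointwise check away from the origin ($\partial_{x_1}^2\log|x|=(x_2^2-x_1^2)/|x|^4=2\pi\,U^{\rm 2D}(x)$) confirms that the printed identity cannot hold literally. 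The slip propagates to your Fourier step, where the constant $c$ in $\widehat{\log|\cdot|}=-c|\xi|^{-2}$ is left unevaluated and then tacitly fixed so as to reproduce~\eqref{eq:Fourier high}. Since you yourself single out the $2\pi$ bookkeeping as the delicate point, you should have carried the normalization through and either corrected the constants or flagged the discrepancy with the statement. What must (and, with your corrected identity, does) come out right is the multiplier relation $\widehat{U^{\rm 2D}\star\rho}=\bigl(\xi_1^2/|\xi|^2-\tfrac12\bigr)\hat\rho$, equivalently the quadratic form~\eqref{eq:int high equiv} of Lemma~\ref{lem:ener high}, which is all that is used in the rest of the paper; note that deriving~\eqref{eq:int high equiv} from the present lemma in fact requires the identity in the corrected form above. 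Apart from this normalization issue the proof is sound.
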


\begin{proof}
The first part is a direct application of~\cite[Theorem~4.12]{Duoandikoetxea-01}, the convolution being well-defined as a principal value because $U^{\rm 2D}$ averages to $0$ on spheres of center $0$. The distributional identity~\eqref{eq:distri high} follows from a straightforward calculation, similar to the standard proof that $x\mapsto - (2\pi) ^{-1} \log |x|$ is the Green function of the Laplacian in 2D, see e.g.~\cite[Theorem~6.20]{LieLos-01}. The 3D analogue of this computation is in~\cite[Appendix~A]{BaoCaiWan-10}. The expression of the Fourier transform follows directly from~\eqref{eq:2D dip high}: since $-i\partial_{x_j}$ is equivalent to multiplication by $\xi_j$ on the Fourier side,  
\begin{equation}\label{eq:formal}
 \widehat{-\log |\, . \,|} (\xi)  = \frac{2\pi}{|\xi| ^2} 
\end{equation}
follows from the distributional identity
$$ - \Delta \left( -\frac{1}{2\pi} \log | \,.\, | \right) = \delta_0.$$
Note that strictly speaking~\eqref{eq:formal} only makes sense when tested against functions with support outside of the origin (in frequency space), but a mild regularization argument allows to deduce~\eqref{eq:Fourier high} nevertheless since the right-hand side is bounded and is tested against bounded functions (Fourier transforms of $L^1$ functions).  
\end{proof}

An immediate consequence of the above and of Plancherel's formula is that we have three equivalent formulations for the high-frequency interaction energy, and continuity thereof. 

\begin{lemma}[\textbf{The high-frequency interaction energy}]\label{lem:ener high}\mbox{}\\
For $\rho \in L^1 \cap L^2$, let 
$$ U_\rho := -\frac{1}{2\pi} \log |\, . \,| $$
be the Coulomb potential generated by $\rho$, solution of 
$$ -\Delta U_\rho = \rho. $$
We have
\begin{align}\label{eq:int high equiv}
F_{a,b} [\rho]&:= a \int_{\R^2} \rho ^2 + b \int_{\R^2} \left( U ^{\rm 2D} \star \rho \right) \rho \nonumber\\
&= \left(a -\frac{b}{2}\right) \int_{\R^2} \rho ^2 - b \int_{\R^2} \left( \partial_{x_1} ^2 U_\rho \right) \rho \nonumber\\ 
&= \left( a - \frac{b}{2} \right) \int_{\R^2} \rho ^2 + b \int_{\R^2} \int_{\R^2} \frac{\xi_1 ^2}{|\xi| ^2}|\hat{\rho} (\xi)|^2 d\xi \\
&= \int_{\R^2}  \frac{(a+b/2) \xi_1 ^2 + (a - b/2) \xi_2 ^2}{|\xi^2|} |\hat{\rho} (\xi)|^2.
\end{align}
Moreover $\rho \mapsto F_{a,b} [\rho]$ is continuous in the strong $L^2$ topology. 
\end{lemma}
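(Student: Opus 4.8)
The plan is to obtain the chain of identities in~\eqref{eq:int high equiv} by rewriting the defining expression step by step, using only the distributional identity~\eqref{eq:distri high} and the Fourier transform~\eqref{eq:Fourier high} of Lemma~\ref{lem:int high} together with Plancherel's formula; continuity will then be read off the last line, which exhibits $F_{a,b}$ as a quadratic form with bounded Fourier multiplier.

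\emph{First equality.} I convolve the distributional identity~\eqref{eq:distri high} with $\rho$ and substitute the resulting expression for $U^{\rm 2D}\star\rho$ into the definition of $F_{a,b}[\rho]$; recalling that the Coulomb potential $U_\rho$ solves $-\Delta U_\rho=\rho$ and collecting the multiples of $\int_{\R^2}\rho^2$ yields the second line. Both new integrals are finite for $\rho\in L^1\cap L^2$: $\int\rho^2<\infty$ trivially, while $\partial_{x_1}^2 U_\rho$ is, up to a normalization constant, the image of $\rho$ under convolution with the kernel $\partial_{x_1}^2\log|\,.\,|=U^{\rm 2D}+\pi\delta_0$, which maps $L^2$ to $L^2$ by the Calder\'on--Zygmund bound of the first part of Lemma~\ref{lem:int high} (the Dirac part contributing only $\pi\rho\in L^2$).

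\emph{Second and third equalities.} Passing to Fourier variables, Plancherel gives $\int\rho^2=\int|\hat\rho|^2$ and, from $\widehat{U_\rho}=|\xi|^{-2}\hat\rho$ (the Fourier form of $-\Delta U_\rho=\rho$), $-\int(\partial_{x_1}^2 U_\rho)\rho=\int\frac{\xi_1^2}{|\xi|^2}|\hat\rho|^2$; this turns the second line into the third. The fourth line is then the pointwise algebraic identity
\[
\Bigl(a-\tfrac b2\Bigr)+b\,\frac{\xi_1^2}{|\xi|^2}=\frac{(a+b/2)\,\xi_1^2+(a-b/2)\,\xi_2^2}{|\xi|^2},
\]
which follows from $|\xi|^2=\xi_1^2+\xi_2^2$. (The third line can also be obtained directly from~\eqref{eq:Fourier high} and Plancherel, bypassing $U_\rho$.)

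\emph{Continuity, and main obstacle.} The last line reads $F_{a,b}[\rho]=\int_{\R^2}m(\xi)|\hat\rho(\xi)|^2\,d\xi$ with $m(\xi)=\frac{(a+b/2)\xi_1^2+(a-b/2)\xi_2^2}{|\xi|^2}$ a convex combination of $a+b/2$ and $a-b/2$, since $\xi_1^2/|\xi|^2\in[0,1]$; hence $|m|\le M:=\max\{|a+b/2|,|a-b/2|\}$ everywhere. Using the pointwise bound $\bigl||\hat\rho|^2-|\hat\sigma|^2\bigr|\le|\hat\rho-\hat\sigma|\,(|\hat\rho|+|\hat\sigma|)$, Cauchy--Schwarz and Plancherel,
\[
\bigl|F_{a,b}[\rho]-F_{a,b}[\sigma]\bigr|\le M\!\int_{\R^2}\bigl||\hat\rho|^2-|\hat\sigma|^2\bigr|\le M\bigl(\|\rho\|_{L^2}+\|\sigma\|_{L^2}\bigr)\,\|\rho-\sigma\|_{L^2},
\]
so $F_{a,b}$ is locally Lipschitz, in particular continuous, on $L^2(\R^2)$. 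There is no deep difficulty here; the single point requiring care is that $\log|\,.\,|$ is not a tempered distribution in the naive sense and $U^{\rm 2D}$ is not locally integrable, so~\eqref{eq:distri high}, \eqref{eq:Fourier high} and the Fourier manipulations above must be read through the same mild regularization (testing against Fourier transforms of $L^1$ functions — whence the convenience of assuming $\rho\in L^1$) already used in the proof of Lemma~\ref{lem:int high}. Once this is granted, every operator in sight is a bounded Fourier multiplier and the rest is routine.
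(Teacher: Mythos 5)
Your proof is correct and follows exactly the route the paper intends: the identities are read off from the distributional and Fourier formulas of Lemma~\ref{lem:int high} via Plancherel, and continuity is obtained from the last line by observing that the multiplier $m(\xi)$ is bounded by $\max\{|a+b/2|,|a-b/2|\}$. The paper merely declares this ``self-evident from the previous lemma''; you have supplied the same argument in full detail, including the useful locally Lipschitz estimate for the continuity claim.
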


\begin{proof}
The continuity follows most easily from the formulation in Fourier variables, the rest is self-evident from the previous lemma. 
\end{proof}

From the last expression in~\eqref{eq:int high equiv}, it is clear that when~\eqref{eq:parameters cond} holds, there exists $u$ with $F_{a,b} [\rho] \geq 0$ (simply concentrate a test function along frequencies close to $\xi_1 = 0$ or $\xi_2 = 0$). Then Inequality~\eqref{eq:GN gen} makes sense and we can proceed to the 

\begin{proof}[Proof of Theorem~\ref{thm:main}, Item (1)]
Clearly, using Plancherel's formula and~\eqref{eq:GN} we have 
$$ \left| F_{a,b} [|u| ^2] \right| \leq C \int_{\R^2} |\nabla u | ^2 \int_{\R^2 }|u|^2$$
so that the infimum 
$$ 
C (a,b) ^{-1} := \inf \left\{ \frac{\int_{\R^2} |\nabla u| ^2 \int_{\R^2} |u| ^2}{ F_{a,b} [|u|^2]} \,\big|\, F_{a,b} [|u|^2] \geq 0 \right\}
$$
exists. We now want to show it is attained. 


Consider a minimizing sequence $(u_n)$ for the above infimum. Denote $\rho_n = |u_n|^2$. By scaling, we may assume that 
$$ \int_{\R^2} |\nabla u_n| ^2  = \int_{\R^2} |u_n| ^2 = 1.$$
It follows from Plancherel and the Sobolev embedding that, for $j=1,2$, 
$$ \int_{\R^2} \frac{\xi_j ^2}{|\xi| ^2} |\hat{\rho_n} (\xi)| ^2 \leq \int_{\R^2} |\hat{\rho_n} (\xi)| ^2 \leq \int_{\R^2} |u_n| ^4 $$
are bounded uniformly. We use the concentration compactness-principle in the form discussed e.g. in~\cite[Appendix]{LenLew-11} and~\cite[Chapter~3]{Lewin-VMQM} (see also~\cite[Section~3.4]{LewRou-13},~\cite[Section~4.2]{KilVis-08} and references therein). We may assume that vanishing (in the sense of the concentration-compactness principle) does not occur, for otherwise $u_n \to 0$ strongly in $L^4$ and thus $F_{a,b}[|u_n|^2] \to 0$. This would lead to the contradiction $C(a,b) = 0$.   

Thus, modulo extraction and translation (we do not relabel, nor indicate the translation, for our variational problem is translation-invariant), we may assume that 
$$u_n = v_n  + r_n + o (1)$$
where $v_n \to u \neq 0$ weakly in $H^1$ and strongly in $L^p, 1<p<\infty$, the distance between the disjoint supports of $v_n$ and $r_n$ goes to infinity when $n\to \infty$, and the $o(1)$ is in $H^1$ norm. Then, clearly,
\begin{align*}
 \int_{\R^2} |\nabla u_n| ^2 &= \int_{\R^2} |\nabla v_n| ^2 + \int_{\R^2} |\nabla r_n| ^2 + o (1)\\
 \int_{\R^2} |u_n| ^2 &= \int_{\R^2} |v_n| ^2 + \int_{\R^2} |r_n| ^2 + o (1)\\
 \int_{\R^2} |u_n| ^4 &= \int_{\R^2} |v_n| ^4 + \int_{\R^2} |r_n| ^4 + o (1)
\end{align*}
and since $U^{\rm 2D}$ decays uniformly at large distances 
$$ \int_{\R^2} \left( U ^{\rm 2D} \star |u_n| ^2 \right) |u_n| ^2 = \int_{\R^2} \left( U ^{\rm 2D} \star |v_n| ^2 \right) |v_n| ^2 + \int_{\R^2} \left( U ^{\rm 2D} \star |r_n| ^2 \right) |r_n| ^2 + o (1),$$
using that the distance between the supports of $v_n$ and $r_n$ diverges.

Then we get easily 
\begin{align}\label{eq:split 1}
 1 &= \int_{\R^2} |\nabla u_n| ^2 \int_{\R^2} | u_n| ^2 \nonumber \\
 &\geq \int_{\R^2} |\nabla v_n| ^2 \int_{\R^2} |v_n| ^2 + \int_{\R^2} |\nabla r_n| ^2 \int_{\R^2} | r_n| ^2 + o (1) \nonumber \\
 &\geq \int_{\R^2} |\nabla v_n| ^2 \int_{\R^2} |v_n| ^2 + C(a,b) ^{-1} F_{a,b} [|r_n| ^2] + o(1). 
\end{align}
On the other hand we obtain  
\begin{multline*}
 F_{a,b} [|v_n| ^2] + C(a,b) \int_{\R^2} |\nabla r_n| ^2 \int_{\R^2} | r_n| ^2  + o (1) \geq F_{a,b} [|v_n| ^2] + F_{a,b} [|r_n| ^2] + o (1) \\ \geq F_{a,b} [|u_n| ^2] = C (a,b) + o(1). 
\end{multline*}
Combining the first inequality with $C(a,b)^{-1}$ times the second we deduce 
$$ C(a,b)^{-1} F_{a,b} [|v_n| ^2] \geq \int_{\R^2} |\nabla v_n| ^2 \int_{\R^2} |v_n| ^2 + o (1).$$ 
Passing to the liminf in the right-hand side and using the strong $L^p$ convergence in the left-hand side we conclude that 
$$ C(a,b)^{-1} F_{a,b} [|v| ^2] \geq \int_{\R^2} |\nabla v| ^2 \int_{\R^2} |v| ^2 $$ 
and thus $v$ is the sought-after optimizer.
\end{proof}

\begin{remark}[Strong convergence of minimizing sequences]\label{rem:strong CV}\mbox{}\\
In~\eqref{eq:split 1} we have dropped the positive cross-terms 
$$ \int_{\R^2} |\nabla v_n| ^2 \int_{\R^2} |r_n| ^2 + \int_{\R^2} |\nabla r_n| ^2 \int_{\R^2} | v_n| ^2$$
from the lower bound. Retaining them we have the additional information that they must converge to $0$ in the limit, hence that $v_n \to v$ strongly in $H^1$. This convergence holds (modulo translation and extraction) for any minimizing sequence having 
$\int_{\R^2} |\nabla v_n| ^2 = \int_{\R^2} |v_n| ^2 = 1.$ 
\hfill\qed
\end{remark}

\section{Proof of the main stability criterion}\label{sec:proof main}

Let us start by relating the true, physical, interaction~\eqref{eq:2D int II} to its high frequency limit. Since the stability/instability criterion is related to whether or not minimizing sequences collapse to a point we need the following:

\begin{lemma}[\textbf{High frequency limit}]\label{lem:high lim}\mbox{}\\
Let $(\nu_n)$ be a bounded sequence in $L^1 \cap L^2$, $(L_n)$ a sequence of positive numbers with $L_n \to 0$ and  
$$\rho_n (x) := L_n ^{-2} \nu_n \left(\frac{x}{L_n}\right).$$  
We have, for the physical 2D interaction~\eqref{eq:F int}, 
\begin{equation}\label{eq:int collapse}
F^{\rm int} [ \rho_n] = \frac{2}{L_n^2} \int_{\R^2} \frac{n_3 ^2 \xi_2 ^2 - (1-2n_3^2)\xi_1 ^2}{|\xi|^2}|\hat{\nu_n}(\xi)| ^2 d \xi + O(L_n ^{-1})
\end{equation}
in the limit $n\to \infty$.
\end{lemma}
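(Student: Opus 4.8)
\emph{Plan.} The idea is to insert the scaling into the Fourier representation~\eqref{eq:2D int freq} of the physical interaction and then perform the $s$-integral, keeping track of the error made by dropping the $s^2$ term in the denominator. With the Fourier convention of the paper, $\rho_n(x)=L_n^{-2}\nu_n(x/L_n)$ gives $\hat{\rho_n}(\xi)=\hat{\nu_n}(L_n\xi)$, so that
\[
F^{\rm int}[\rho_n]=-\frac{1}{\pi}\int_{\R^2}\int_{\R}\frac{(1-2n_3^2)\xi_1^2-n_3^2\xi_2^2}{|\xi|^2+s^2}\,|\hat{\nu_n}(L_n\xi)|^2\,e^{-s^2/(4\pi)}\,ds\,d\xi .
\]
Changing variables $\eta=L_n\xi$ and writing $P(\eta):=(1-2n_3^2)\eta_1^2-n_3^2\eta_2^2$ (which satisfies $|P(\eta)|\le|\eta|^2$ because $0\le n_3^2\le 1$) this becomes
\[
F^{\rm int}[\rho_n]=-\frac{1}{\pi L_n^2}\int_{\R^2}\int_{\R}\frac{P(\eta)}{|\eta|^2+L_n^2 s^2}\,|\hat{\nu_n}(\eta)|^2\,e^{-s^2/(4\pi)}\,ds\,d\eta .
\]

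\emph{Leading term.} Replacing $|\eta|^2+L_n^2 s^2$ by $|\eta|^2$ and using $\int_{\R}e^{-s^2/(4\pi)}\,ds=2\pi$ produces exactly
\[
T_n:=-\frac{2}{L_n^2}\int_{\R^2}\frac{P(\eta)}{|\eta|^2}|\hat{\nu_n}(\eta)|^2\,d\eta=\frac{2}{L_n^2}\int_{\R^2}\frac{n_3^2\eta_2^2-(1-2n_3^2)\eta_1^2}{|\eta|^2}|\hat{\nu_n}(\eta)|^2\,d\eta ,
\]
which is the announced main contribution to~\eqref{eq:int collapse} (finite since $|P(\eta)/|\eta|^2|\le 1$ and $(\nu_n)$ is bounded in $L^2$, by Plancherel).

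\emph{Error estimate.} It remains to show $F^{\rm int}[\rho_n]-T_n=O(L_n^{-1})$. From $\frac{1}{|\eta|^2+L_n^2 s^2}-\frac{1}{|\eta|^2}=-\frac{L_n^2 s^2}{|\eta|^2(|\eta|^2+L_n^2 s^2)}$ one obtains
\[
F^{\rm int}[\rho_n]-T_n=\frac{1}{\pi}\int_{\R^2}\int_{\R}\frac{P(\eta)}{|\eta|^2}\cdot\frac{s^2}{|\eta|^2+L_n^2 s^2}\,|\hat{\nu_n}(\eta)|^2\,e^{-s^2/(4\pi)}\,ds\,d\eta .
\]
Using $|P(\eta)/|\eta|^2|\le 1$ and the arithmetic--geometric mean inequality $|\eta|^2+L_n^2 s^2\ge 2L_n|s||\eta|$, the integrand is bounded by $\frac{|s|}{2L_n}\,\frac{|\hat{\nu_n}(\eta)|^2}{|\eta|}\,e^{-s^2/(4\pi)}$, whence
\[
\bigl|F^{\rm int}[\rho_n]-T_n\bigr|\le\frac{1}{2\pi L_n}\Bigl(\int_{\R}|s|\,e^{-s^2/(4\pi)}\,ds\Bigr)\int_{\R^2}\frac{|\hat{\nu_n}(\eta)|^2}{|\eta|}\,d\eta .
\]
The last integral is bounded uniformly in $n$: on $\{|\eta|\le 1\}$ one uses $\|\hat{\nu_n}\|_{L^\infty}\le (2\pi)^{-1}\|\nu_n\|_{L^1}$ together with the local integrability of $|\eta|^{-1}$ in dimension two, and on $\{|\eta|>1\}$ one bounds $|\eta|^{-1}\le 1$ and invokes Plancherel to get $\|\nu_n\|_{L^2}^2$; both are controlled because $(\nu_n)$ is bounded in $L^1\cap L^2$. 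This yields the claim.

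\emph{Main obstacle.} The delicate point is precisely this error bound, i.e.\ justifying uniformly that the $s^2$ term in the denominator is negligible even near $\eta=0$, where $|\eta|^{-2}$ is singular. The AM--GM trick trades one power of the singularity against a gain of $L_n^{-1}$ (rather than $L_n^{-2}$), and the remaining $|\eta|^{-1}$ singularity is harmless in two dimensions once the $L^1$ bound on $\nu_n$ is used to control $\hat{\nu_n}$ near the origin; the rest is routine.
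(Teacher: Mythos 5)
Your proof is correct, and it reaches the same $O(L_n^{-1})$ error as the paper, but the key technical step is handled differently. The paper also starts from the scaled Fourier representation~\eqref{eq:int scale}, but it controls the replacement of $|\xi|^2+s^2L_n^2$ by $|\xi|^2$ via a frequency splitting at a scale $\eps$: the low frequencies $|\xi|\le\eps$ contribute $O(\eps^2 L_n^{-2})$ (using the $L^\infty$ bound on $\hat{\nu_n}$ coming from the $L^1$ bound on $\nu_n$), the high frequencies are treated with the first-order bound $\bigl|\tfrac{1}{|\xi|^2+s^2L_n^2}-\tfrac{1}{|\xi|^2}\bigr|\le \tfrac{s^2L_n^2}{|\xi|^4}$ (using the $L^2$ bound), and optimizing $\eps=L_n^{1/2}$ balances the two errors at $O(L_n^{-1})$. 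You instead keep the exact algebraic identity for the difference of the two kernels and absorb the singularity at $\eta=0$ with the AM--GM bound $|\eta|^2+L_n^2s^2\ge 2L_n|s||\eta|$, which leaves only an $|\eta|^{-1}$ singularity, locally integrable in two dimensions; the same $L^1$ and $L^2$ bounds on $\nu_n$ then close the estimate in one stroke, with no parameter to optimize. Your variant is arguably cleaner and avoids the splitting altogether; the paper's splitting has the advantage of generalizing directly to the second-order expansion needed in Lemma~\ref{lem:high lim ref}, where the next term in the expansion genuinely diverges logarithmically and a cutoff at scale $L_n$ cannot be avoided. All the individual bounds you invoke ($|P(\eta)|\le|\eta|^2$ for $0\le n_3^2\le 1$, the scaling $\hat{\rho_n}(\xi)=\hat{\nu_n}(L_n\xi)$ with the paper's Fourier convention, and $\int_\R e^{-s^2/4\pi}\,ds=2\pi$) are correct.
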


\begin{proof}
We use the expression in Fourier space~\eqref{eq:2D int freq}, which gives, after scaling, 
\begin{equation}\label{eq:int scale} 
F^{\rm int} [\rho_n] = \frac{1}{\pi L_n ^2}\int_{\R} \int_{\R^2} \frac{n_3 ^2 \xi_2 ^2 - (1-2n_3^2)\xi_1 ^2}{|\xi|^2 + s^2 L_n ^2}|\hat{\nu_n}(\xi)| ^2 e^{-s^2 / 4\pi }d\xi ds.
\end{equation}
We split the $\xi$ integral according to whether $|\xi| \leq \eps$ or the other way around. The former contribution is $O(\eps ^2 L_n ^{-2})$ since $(\nu_n)$ is bounded in $L^1$ and thus $(\hat{\nu_n})$ is bounded in $L^\infty$. For the latter contribution we use that 
$$ \left| \frac{1}{|\xi| ^2 + s^2 L_n ^2} - \frac{1}{|\xi| ^2}\right| \leq \frac{s^2 L^2}{|\xi| ^ 4}.$$
This gives 
\begin{multline*}
\int_{\R} \int_{B(0,\eps) ^c} \frac{n_3 ^2 \xi_2 ^2 - (1-2n_3^2)\xi_1 ^2}{|\xi|^2 + s^2 L_n ^2}|\hat{\nu_n}(\xi)| ^2 e^{-s^2 / 4\pi }d\xi ds \\= 2 \pi \int_{B(0,\eps) ^c} \frac{n_3 ^2 \xi_2 ^2 - (1-2n_3^2)\xi_1 ^2}{|\xi|^2 }|\hat{\nu_n}(\xi)| ^2 d\xi + O \left( L_n ^2 \int_{B(0,\eps) ^c} \frac{1}{|\xi|^2}|\hat{\nu_n}(\xi)| ^2 d\xi \right)\\
= 2 \pi \int_{\R^2} \frac{n_3 ^2 \xi_2 ^2 - (1-2n_3^2)\xi_1 ^2}{|\xi|^2 }|\hat{\nu_n}(\xi)| ^2 d\xi + O (\eps ^2) + O (L_n^{2} \eps ^{-2})
\end{multline*}
where we bound $|\xi| ^{-2}$ by its sup over $B(0,\eps) ^c$, use that $(\nu_n)$, hence $(\hat{\nu_n})$ is bounded in $L^2$, and get rid of the contribution of $B(0,\eps)$ as precedently. There remains to optimize the estimate by picking $\eps = L_n^{1/2}$.
\end{proof}

Now we can complete the 

\begin{proof}[Proof of Theorem~\ref{thm:main}]
We start with the existence result. Let $(u_n)$ be a minimizing sequence. If it is bounded in $H^1 (\R^2)$, we extract a weakly convergent subsequence in $H^1$ and may conclude immediately: using~\cite[Lemma~2.2]{BaoAbdCai-12} (analogue of Lemma~\ref{lem:ener high}, but for the original interaction~\eqref{eq:F int}), the interactions terms are bounded below uniformly. We deduce that the potential term $\int_{\R^2} V |u_n| ^2$ is also uniformly bounded. Since $V$ grows at infinity, we may then use a compact embedding yielding strong convergence in $L^p$, $2< p < \infty$ along a subsequence. The interaction terms then converge by~\cite[Lemma~2.2]{BaoAbdCai-12} again, and we may pass to the liminf in the kinetic energy using Fatou's lemma. All in all, for the weak limit $v$ we get 
$$ E_{\rm 2D} \geq \cE_{\rm 2D} [v]$$
and by the strong $L^2$ convergence, $v$ is $L^2$-normalized. Thus we obtain an optimizer for Problem~\eqref{eq:2D min ener}. 

Let us then work under the assumptions of Item (2) and prove by contradiction that any minimizing sequence must be bounded in $H^1 (\R ^2)$. Let $(u_n)$ be such a minimizing sequence and 
$$L_n = \left(\int_{\R^2} |\nabla u_n| ^2\right) ^{-1/2}.$$  
Assume that there exists a subsequence (not relabeled) along which $L_n \to 0$. Then define
$$ v_n (x):= L_n  u_n \left( x L_n \right).$$
By definition we have 
$$ \int_{\R^2} |\nabla v_n| ^2 = \int_{\R^2} | v_n| ^2 = 1$$
and 
$$
\int_{\R^2} | u_n| ^4 = L_n ^{-2} \int_{\R^2} |v_n| ^4.
$$
Moreover $\rho_n = |u_n| ^2$ satisfies the assumptions of Lemma~\ref{lem:high lim} with $\nu_n = |v_n| ^2$. Therefore 
$$ \cE_{\rm 2D} [u_n] \geq L_n ^{-2} \left( \int_{\R ^2} |\nabla v_n| ^2  - F_{a,b} [|v_n| ^2] \right) + O(L_n ^{-1})$$
with the parameters $a,b$ chosen as in~\eqref{eq:parameters}. In case~\eqref{eq:param triv}, it follows from Lemma~\ref{lem:ener high} that $F_{a,b} [|v_n| ^2] \leq 0$. In the other case we get that   
$$ 
F_{a,b} [|v_n| ^2] \leq C(a,b) \int_{\R^2} |\nabla v_n| ^2 \int_{\R^2} | v_n| ^2 < \int_{\R^2} |\nabla v_n| ^2  \int_{\R^2} | v_n| ^2. 
$$
In both cases there is a constant $c>0$ such that
$$\cE_{\rm 2D} [u_n] \geq c L_n ^{-2} \int_{\R ^2} |\nabla v_n| ^2  + O(L_n ^{-1})$$
which is a contradiction if $L_n \to 0$ because $\cE_{\rm 2D} [u_n]$ is bounded above and $\int_{\R ^2} |\nabla v_n| ^2 = 1.$ We deduce that the sequence $(u_n)$ was in fact bounded in $H^1$ in the first place, and we may conclude as discussed at the beginning of the proof.

\medskip

For the non-existence result, it suffices to construct a trial state whose energy tends to $-\infty$. Let again the parameters $a,b$ be chosen according to~\eqref{eq:parameters} and $u_{a,b}$ be the optimizer for Inequality~\eqref{eq:GN gen}, proven to exist in the preceding section. By scaling invariance we may choose it to be $L^2$-normalized. Define 
$$ u_L := L ^{-1} u_{a,b} \left( \frac{x}{L}\right)$$
for some sequence of lengths $L \to 0$. Clearly 
$$ \int_{\R^2 } V |u_L| ^2 \to V (0).$$
For the rest of the energy we get, using Lemma~\ref{lem:high lim} again, 
\begin{align*}
 \cE_{\rm 2D} [u_L] &= L^{-2} \left( \int_{\R^2} |\nabla u_{a,b}| - F_{a,b} [|u_{a,b}| ^2] \right) + O (L^{-1}) + V(0) + o (1)\\
 &= L^{-2} \left(1 - C(a,b) \right) \int_{\R^2} |\nabla u_{a,b}| ^2 + O (L^{-1})
\end{align*}
and this tends to $-\infty$ when $L\to 0$ under the conditions of Item (3).
\end{proof}

\section{Borderline cases}\label{sec:borderline}

We start by refining Lemma~\ref{lem:high lim}, taking into account subleading corrections: 

\begin{lemma}[\textbf{High frequency limit, refinement}]\label{lem:high lim ref}\mbox{}\\
With the notation of Lemma~\ref{lem:high lim ref}, we have, for the physical 2D interaction~\eqref{eq:F int}, 
\begin{multline}\label{eq:int collapse ref}
F^{\rm int} [ \rho_n] = \frac{2}{L_n^2} \int_{\R^2} \frac{n_3 ^2 \xi_2 ^2 - (1-2n_3^2)\xi_1 ^2}{|\xi|^2}|\hat{\nu_n}(\xi)| ^2 d \xi \\
+ 4\pi \int_{B(0,L_n) ^c} \frac{(1-2n_3^2)\xi_1 ^2 - n_3 ^2 \xi_2 ^2}{|\xi|^4}|\hat{\nu_n}(\xi)| ^2 d \xi + O(1)
\end{multline}
in the limit $n\to \infty$ (i.e. $L_n \to 0$).
\end{lemma}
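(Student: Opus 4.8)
The plan is to carry the proof of Lemma~\ref{lem:high lim} one order further in $L_n$. As there, one starts from the rescaled Fourier representation \eqref{eq:int scale}, which we write as
$$F^{\rm int}[\rho_n]=\frac{1}{\pi L_n^2}\int_{\R}\int_{\R^2}\frac{g(\xi)}{|\xi|^2+s^2L_n^2}\,|\hat\nu_n(\xi)|^2\,e^{-s^2/4\pi}\,d\xi\,ds,\qquad g(\xi):=n_3^2\xi_2^2-(1-2n_3^2)\xi_1^2,$$
and uses the elementary identity, valid for $\xi\neq0$,
$$\frac{1}{|\xi|^2+s^2L_n^2}=\frac{1}{|\xi|^2}-\frac{s^2L_n^2}{|\xi|^4}+\frac{s^4L_n^4}{|\xi|^4\,(|\xi|^2+s^2L_n^2)}.$$
I would split the $\xi$-integration at $|\xi|=L_n$, applying this identity on $B(0,L_n)^c=\{|\xi|>L_n\}$ and treating $\{|\xi|\le L_n\}$ directly. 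On $B(0,L_n)^c$ the first term yields $\tfrac{2}{L_n^2}\int_{B(0,L_n)^c}|\xi|^{-2}g(\xi)|\hat\nu_n|^2$ after carrying out $\tfrac1\pi\int_{\R}e^{-s^2/4\pi}\,ds=2$, and the second term yields $-4\pi\int_{B(0,L_n)^c}|\xi|^{-4}g(\xi)|\hat\nu_n|^2$ after $\tfrac1\pi\int_{\R}s^2e^{-s^2/4\pi}\,ds=4\pi$; the latter is exactly the second term on the right-hand side of \eqref{eq:int collapse ref}. It then remains to prove that three error contributions are $O(1)$: (i) completing the first integral to all of $\R^2$, i.e. $\tfrac{2}{L_n^2}\int_{B(0,L_n)}|\xi|^{-2}g(\xi)|\hat\nu_n|^2$; (ii) the full contribution of $\{|\xi|\le L_n\}$ to $F^{\rm int}[\rho_n]$; and (iii) the remainder $R_n:=\tfrac1{\pi L_n^2}\int_{\R}\int_{B(0,L_n)^c}\tfrac{g(\xi)\,s^4L_n^4}{|\xi|^4(|\xi|^2+s^2L_n^2)}|\hat\nu_n|^2\,e^{-s^2/4\pi}\,d\xi\,ds$.

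Throughout one uses that $\|\hat\nu_n\|_{L^\infty}\le(2\pi)^{-1}\|\nu_n\|_{L^1}$ and $\|\hat\nu_n\|_{L^2}=\|\nu_n\|_{L^2}$ are bounded uniformly in $n$, and that $|g(\xi)|\le|\xi|^2$ since $0\le n_3^2\le1$. The two easy errors follow immediately: in (i) the $\xi$-integrand $|\xi|^{-2}g(\xi)|\hat\nu_n|^2$ is bounded by a constant, so the term is $\le CL_n^{-2}|B(0,L_n)|=O(1)$; in (ii), on $\{|\xi|\le L_n\}$ one has $|g(\xi)|\,(|\xi|^2+s^2L_n^2)^{-1}\le1$, whence that contribution is $\le C(\pi L_n^2)^{-1}|B(0,L_n)|\int_{\R}e^{-s^2/4\pi}\,ds=O(1)$.

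The main obstacle is the bound $R_n=O(1)$. Bounding $|g(\xi)|\le|\xi|^2$ and then using the two trivial estimates $\tfrac{s^4L_n^4}{|\xi|^2(|\xi|^2+s^2L_n^2)}\le s^4L_n^4|\xi|^{-4}$ and $\le s^2L_n^2|\xi|^{-2}$, I would split $B(0,L_n)^c$ into $\{|\xi|>1\}$ and the annulus $\{L_n<|\xi|\le1\}$. On $\{|\xi|>1\}$ the kernel is $\le s^4L_n^4|\xi|^{-4}$ and, with $\hat\nu_n$ bounded in $L^\infty$, the contribution is $\le CL_n^2\int_{\R}s^4e^{-s^2/4\pi}\,ds=O(L_n^2)$. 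On the annulus one bounds $|\hat\nu_n|^2$ by a constant, passes to polar coordinates, and evaluates $\int_{L_n}^1\min(s^4L_n^4r^{-4},\,s^2L_n^2r^{-2})\,r\,dr$ by splitting at the crossover radius $r=|s|L_n$: this integral is $\le\tfrac12 s^4L_n^2$ for $|s|\le1$, it is $\le s^2L_n^2\log|s|+\tfrac12 s^2L_n^2$ for $1\le|s|\le L_n^{-1}$, and for $|s|>L_n^{-1}$ the crude value $s^2L_n^2\log(1/L_n)$ suffices since then $e^{-s^2/4\pi}\le e^{-1/(4\pi L_n^2)}$ absorbs it. Restoring the prefactor $\tfrac1{\pi L_n^2}$ together with the $2\pi$ from the angular integration and integrating in $s$ against $e^{-s^2/4\pi}$, the annular contribution is $\le C\int_{\R}\big(s^4+s^2+s^2\log(2+|s|)\big)e^{-s^2/4\pi}\,ds+o(1)=O(1)$. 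The crucial point — and the reason the frequency cutoff in \eqref{eq:int collapse ref} is placed precisely at $|\xi|=L_n$ — is that the crossover computation must produce $\log|s|$ (harmless against the Gaussian in $s$) rather than $\log(1/L_n)$; a coarser cutoff such as $|\xi|=L_n^{1/2}$ would leave an extraneous $O(\log(1/L_n))$. Combining the two leading terms with (i), (ii) and $R_n=O(1)$ yields \eqref{eq:int collapse ref}.
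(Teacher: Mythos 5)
Your proposal is correct and follows essentially the same route as the paper: the rescaled Fourier representation \eqref{eq:int scale}, a second-order expansion of $(|\xi|^2+s^2L_n^2)^{-1}$ on the high-frequency region, and a cutoff at $|\xi|=L_n$ (the paper introduces a general cutoff $\eps$ and then optimizes to $\eps=L_n$), with the same $O(1)$ bookkeeping for the low-frequency part and for completing the leading integral to all of $\R^2$. The only difference is that your crossover/minimum analysis of the remainder $R_n$ is more elaborate than necessary: since $|g(\xi)|\le|\xi|^2$ and $\|\hat{\nu_n}\|_{L^\infty}$ is uniformly bounded, the crude bound $s^4L_n^4|\xi|^{-4}$ integrated over $B(0,L_n)^c$ already gives $O(s^4L_n^2)$ with no logarithm appearing, so after the prefactor $(\pi L_n^2)^{-1}$ and the Gaussian integration in $s$ one gets $O(1)$ directly, which is exactly how the paper argues.
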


\begin{proof}
Starting from~\eqref{eq:int scale}, we again split the $\xi$ integral according to whether $|\xi| \leq \eps$ or the other way around. Again, the small frequency part contributes a  $O(\eps ^2 L_n ^{-2})$. For the high frequency part we use that  
$$ \left| \frac{1}{|\xi| ^2 + s^2 L_n ^2} - \frac{1}{|\xi| ^2} + \frac{s^2 L_n^2}{|\xi| ^4}\right| \leq \frac{s^4 L_n^4}{|\xi| ^ 6}.$$
This gives 
\begin{align*}
\int_{\R} \int_{B(0,\eps) ^c} \frac{n_3 ^2 \xi_2 ^2 - (1-2n_3^2)\xi_1 ^2}{|\xi|^2 + s^2 L_n ^2} &|\hat{\nu_n}(\xi)| ^2 e^{-s^2 / 4\pi }d\xi ds \\
&= 2 \pi \int_{\R^2} \frac{n_3 ^2 \xi_2 ^2 - (1-2n_3^2)\xi_1 ^2}{|\xi|^2 }|\hat{\nu_n}(\xi)| ^2 d\xi + O (\eps ^2 )\\
& - 4 \pi^2 L_n ^2 \int_{B(0,\eps) ^c} \frac{n_3 ^2 \xi_2 ^2 - (1-2n_3^2)\xi_1 ^2}{|\xi|^4 }|\hat{\nu_n}(\xi)| ^2 d\xi \\& + O \left( L_n ^4 \int_{B(0,\eps) ^c} \frac{1}{|\xi| ^4} |\hat{\nu_n} (\xi)| ^2 d\xi \right).
\end{align*}
Using that $(\hat{\nu_n})$ is uniformly bounded in $L^{\infty}$ (because $(\nu_n)$ is uniformly bounded in $L^1$) we conclude that the last term is a $O(L_n ^4 \eps ^{-2})$. Choosing $\eps = L_n$ optimizes the estimate and yields the lemma. 
\end{proof}

\begin{proof}[Proof of Theorem~\ref{thm:border}]
We prove the statements in the order they appear in the Theorem, but the main argument is anyway the same in all cases.

\medskip

\noindent \textbf{Proof of Item 1.} As in the proof of Item (2) in Theorem~\ref{thm:main}, we mostly need prove that minimizing sequences must be bounded in $H^1$, and the existence result follows easily. Let then $(u_n)$ be a minimizing sequence and 
$$L_n = \left(\int_{\R^2} |\nabla u_n| ^2\right) ^{-1/2}.$$  
Assume that there exists a subsequence (not relabeled) along which $L_n \to 0$. Then define
\begin{equation}\label{eq:scale sequence}
 v_n (x):= L_n  u_n \left( x L_n \right).
\end{equation}
By definition we have 
$$ \int_{\R^2} |\nabla v_n| ^2 = \int_{\R^2} | v_n| ^2 = 1.$$
We use Lemma~\ref{lem:high lim ref} with $\rho_n = |u_n| ^2$ and $\nu_n = |v_n| ^2$: 
\begin{align}\label{eq:border stab}
 \cE_{\rm 2D} [u_n] &\geq L_n ^{-2} \left( \int_{\R ^2} |\nabla v_n| ^2  - F_{a,b} [|v_n| ^2] \right) \nonumber
 \\& - 3\pi \lambda \int_{B(0,L_n) ^c} \frac{(1-2n_3^2)\xi_1 ^2 - n_3 ^2 \xi_2 ^2}{|\xi|^4}|\hat{\nu_n}(\xi)| ^2 d \xi + O(1) \nonumber
 \\&\geq - 3\pi \lambda \int_{B(0,L_n) ^c} \frac{(1-2n_3^2)\xi_1 ^2 - n_3 ^2 \xi_2 ^2}{|\xi|^4}|\hat{\nu_n}(\xi)| ^2 d \xi + O(1)  
\end{align}
with the parameters $a,b$ chosen as in~\eqref{eq:parameters} and using~\eqref{eq:GN gen} with $C(a,b) = 1$. The term in the last line is $O(|\log L_n|)$ because $(\hat{\nu_n})$ is uniformly bounded in $L^{\infty}$. Since the energy is bounded from above and $L_n \to 0$ by assumption, multiplying the previous inequality by $L_n ^2$ gives that  
$$ \int_{\R ^2} |\nabla v_n| ^2  - F_{a,b} [|v_n| ^2] \underset{n\to \infty}{\to} 0.$$
Hence $v_n$ is an optimizing sequence for~\eqref{eq:GN gen}. As per Remark~\ref{rem:strong CV}, we may after extraction and translation assume it converges strongly in $H^1$. We do not relabel the extraction and ignore the translation, for it affects only the potential term of the energy that we have already dropped from the lower bound.

For shortness, we denote 
$$ f(\xi) = \frac{(1-2n_3^2)\xi_1 ^2 - n_3 ^2 \xi_2 ^2}{|\xi|^4}.$$
Pick now $ L_n \ll \ell_n = L_n ^{\alpha} \ll 1 $ with $\alpha$ a small positive number to be fixed later on. Clearly, 
\begin{equation}\label{eq:a_n}
 a_n := \int_{L_n \leq |\xi| \leq \ell_n} |\xi| \left|  f(\xi) \right| d\xi \underset{n\to \infty}{\to} 0. 
\end{equation}
On the other hand, since $(\nu_n)$ converges strongly in $L^1$, it is a tight sequence, and modulo extracting a subsequence, we may assume that (see e.g.~\cite[Lemma~3.8]{Lewin-VMQM}) 
\begin{equation}\label{eq:tightness} 
\int_{|x| \geq a_n^{-1/2}} \nu_n (x) dx \underset{n\to \infty}{\to} 0.
\end{equation}
Then, we split the last term in~\eqref{eq:border stab} as follows:
\begin{align}\label{eq:border split}
\int_{B(0,L_n) ^c} &\frac{(1-2n_3^2)\xi_1 ^2 - n_3 ^2 \xi_2 ^2}{|\xi|^4}|\hat{\nu_n}(\xi)| ^2 d \xi\nonumber
\\&= \int_{ 1 \leq |\xi|} f(\xi) |\hat{\nu_n} (\xi)| ^2 dx dy d \xi\nonumber
\\&+ \int_{ \ell_n \leq  |\xi| \leq 1} \iint_{\R^2 \times \R ^2}\nu_n (x) \nu_n (y) e^{i\xi \cdot(x-y)} f(\xi)  dx dy d \xi\nonumber
\\&+ \int_{ L_n \leq |\xi| \leq \ell_n} \iint_{|x-y| \leq a_n^{-1/2}} \nu_n (x) \nu_n (y) e^{i\xi \cdot(x-y)}  f(\xi)   dx dy d \xi\nonumber
\\&+ \int_{ L_n \leq |\xi| \leq \ell_n} \iint_{|x-y| \geq a_n ^{1/2}} \nu_n (x) \nu_n (y) e^{i\xi \cdot(x-y)} f(\xi)  dx dy d \xi\nonumber
\\&= \mathrm{I} + \mathrm{II} + \mathrm{III} + \mathrm{IV}.
\end{align}
Term number $\mathrm{I}$ we just bound by a fixed constant, using that $\hat{\nu_n}$ is uniformly bounded in $L^2$. Since $\int \nu_n = 1$ it is immediate to see that 
$$ |\mathrm{II}| \leq \int_{\ell_n \leq |\xi| \leq 1} |f(\xi)| d\xi \leq C | \log \ell_n | = C \alpha |\log L_n|.$$
By a similar bound, but using now~\eqref{eq:tightness}, we get 
$$ |\mathrm{IV}| \leq C |\log L_n| \iint_{|x-y| \geq a_n ^{1/2}} \nu_n (x) \nu_n (y) dx dy \ll |\log L_n|.$$
As for $\mathrm{III}$, by a first order Taylor expansion of $e^{i\xi \cdot(x-y)}$, 
\begin{align*}
 \int_{ L_n \leq |\xi| \leq \ell_n} \iint_{|x-y| \leq a_n^{-1/2}} &\nu_n (x) \nu_n (y) e^{i\xi \cdot(x-y)}  f(\xi)  dx dy d \xi  \\
 &= \int_{ L_n \leq |\xi| \leq \ell_n} \iint_{|x-y| \leq a_n^{-1/2}} \nu_n (x) \nu_n (y)   f(\xi)   dx dy d \xi\\
 &+ O\left( \int_{ L_n \leq |\xi| \leq \ell_n} \iint_{|x-y| \leq a_n^{-1/2}} \nu_n (x) \nu_n (y) |x-y| |\xi|  |f(\xi)|  dx dy d \xi \right).
\end{align*}
Recalling~\eqref{eq:a_n} the last term is bounded by $a_n ^{1/2} \to 0$ when $n\to \infty$. For the first one we perform explicitly the $\xi$ integration:
\begin{align*}
 \int_{ L_n \leq |\xi| \leq \ell_n} f(\xi) d\xi &= \int_{ r = L_n} ^{\ell_n} \int_{\theta = 0} ^{2\pi} \frac{(1-2n_3 ^2) \cos^2 \theta - n_3^2 \sin ^2 \theta}{r^2} rdr d\theta \\
 &= (1-3 n_3 ^2 ) (\log \ell_n - \log L_n) \int_{0} ^{2\pi} \cos ^2 (\theta) d\theta  \\
 &= \pi (1-3 n_3 ^2 ) (\alpha - 1)  \log L_n.
\end{align*}
Combining with~\eqref{eq:tightness} we get 
$$ \mathrm{III} \sim \pi (1-3 n_3 ^2 ) (\alpha - 1) \log L_n.$$
All in all, since $\alpha$ can be chosen arbitrarily small, we may return to~\eqref{eq:border stab} and conclude that 
$$ \cE_{\rm 2D}[u_n] \geq c \lambda (1-3n_3 ^2) \log L_n  + O (1)$$
with $c>0$ a constant. Then, if $\lambda (1-3n_3 ^2) < 0$ this certainly implies that the energy must be bounded below and rules out the possibility that $L_n \to 0$. Minimizing sequences must then be bounded in $H^1$ and we may conclude the proof in a standard way.

\medskip

\noindent \textbf{Proof of Item 2.} Assume now that $\lambda (1-3n_3 ^2) = 0$. Then either $\lambda = 0$ and the result is already included in~\cite{GuoSei-13} or $1-3n_3 ^2 = 0$ and we return to~\eqref{eq:border stab}. In this case
$$ f(\xi) = \frac{(1-2n_3^2)\xi_1 ^2 - n_3 ^2 \xi_2 ^2}{|\xi|^4}$$
averages to $0$ on spheres centered at the origin, so that term number $\mathrm{III}$ in~\eqref{eq:border split} is uniformly bounded, its limit $L_n \to 0$ being well-defined as a principal value. Indeed, following~\cite[Chapter~4]{Duoandikoetxea-01}, we have that the inverse Fourier transform of $f(\xi)$ is 
$$ c \delta_0 + c' \frac{x_1 ^2}{|x| ^2}$$
for two constants $c$ and $c'$ (this is similar to what we did in Lemma~\ref{lem:int high}). Using Plancherel's formula, the last term in~\eqref{eq:border stab} converges when $L_n \to 0$ to a term of the form  
$$ c \int_{\R^2} |v_n| ^4 + c' \int_{\R^2\times \R ^2} |v_n (x)| ^2 \frac{(x_1-y_1) ^2 }{|x-y|^2} |v_n (y)| ^2 dx dy,$$
which is clearly bounded for $v_n$ bounded in $H^1$. We conclude that the energy is bounded below. However, taking a trial sequence $(u_n)$ of the form~\eqref{eq:scale sequence} with $L_n \to$ and  $v_n \equiv u_{a,b}$ independent of $n$, the above discussion shows that its energy stays is bounded despite it collapsing to a point.   

\medskip

\noindent \textbf{Proof of Item 3.} Take again a trial sequence $(u_n)$ of the form~\eqref{eq:scale sequence} with $L_n \to 0$ and  $v_n \equiv u_{a,b}$. The energy $\cE_{\rm 2D} [u_n]$ is then given by the right-hand side of~\eqref{eq:border stab} again (the potential term is of lower order). The last term can be analyzed exactly as previously, and we obtain that it is bounded above by 
$$ c \lambda (1-3n_3 ^2) \log L_n  + O (1).$$
With $\lambda (1-3n_3 ^2) >0$ this goes to $-\infty$ as $n\to \infty$, and this concludes the proof.
\end{proof}

%

\end{document}